\newcommand{\mcal}{\ensuremath{\mathcal{M}}\xspace}
\newcommand{\scal}{\ensuremath{\mathcal{S}}\xspace}
\newcommand{\acal}{\ensuremath{\mathcal{A}}\xspace}
\newcommand{\gcal}{\ensuremath{\mathcal{G}}\xspace}
\newcommand{\ical}{\ensuremath{\mathcal{I}}\xspace}
\newcommand{\pcal}{\ensuremath{\mathcal{P}}\xspace}
\newcommand{\argmax}{\ensuremath{\text{argmax}}\xspace}
\newcommand{\pr}{\ensuremath{\text{Pr}}\xspace}
\newcommand{\opt}{\ensuremath{\text{OPT}}\xspace}
\newcommand{\ebb}{\ensuremath{\mathbb{E}}\xspace}
\newtheorem{theorem}{Theorem}
\newtheorem{definition}{Definition}
\newtheorem{lemma}{Lemma}
\begin{document}

\title{Matching in Stochastically Evolving Graphs\thanks{Supported by 
the NeST initiative of the EEE/CS School of the University of Liverpool and by the EPSRC grants EP/P020372/1 and EP/P02002X/1.}}
\author{Eleni C.~Akrida\thanks{Department of Computer Science, Durham University, UK.
Email: \texttt{eleni.akrida@durham.ac.uk}} \and 
Argyrios Deligkas\thanks{Department of Computer Science, Royal Holloway University of London, UK. 
Email: \texttt{argyrios.deligkas@rhul.ac.uk}} \and 
George B. Mertzios\thanks{Department of Computer Science, Durham University, UK.
Email: \texttt{george.mertzios@durham.ac.uk}} \and 
Paul G. Spirakis\thanks{Department of Computer Science, University of Liverpool, UK, and 
Computer Engineering \& Informatics Department, University of Patras, Greece. 
Email: \texttt{p.spirakis@liverpool.ac.uk}} \and 
Viktor Zamaraev\thanks{Department of Computer Science, University of Liverpool, Liverpool, UK. 
Email: \texttt{viktor.zamaraev@liverpool.ac.uk}}}

\date{\vspace{-0.5cm}}
\maketitle

\begin{abstract}
This paper studies the maximum cardinality matching  problem in stochastically evolving graphs.
We formally define the arrival-departure model with stochastic departures. 
There, a graph is sampled from a specific probability distribution and it is revealed as a series of 
snapshots. 
Our goal is to study algorithms that create a large matching in the sampled graphs.
We define the price of stochasticity for this problem which intuitively captures the loss of {\em any} 
algorithm in the worst case in the size of the matching due to the uncertainty of the model. 
Furthermore, we prove the existence of a {\em deterministic} optimal algorithm for the problem.
In our second set of results we show that we can efficiently approximate the expected size of a maximum 
cardinality matching by deriving a fully randomized approximation scheme (FPRAS) for it. The FPRAS is the 
backbone of a probabilistic algorithm that is {\em optimal} when the model is defined over two timesteps. 
Our last result is an upper bound of $\frac{2}{3}$ on the price of stochasticity. This means that there is no 
algorithm that can match more than $\frac{2}{3}$ of the edges of an optimal matching in hindsight.\newline

\noindent\textbf{Keywords:} matching, temporal graphs, stochastic graphs.
\end{abstract}

\section{Introduction}
\label{sec:intro}
Matching is one of the most fundamental problems in Algorithms, receiving a lot of attention recently due 
to its natural applications in several fields, from medicine to economics, biology and computer science.
Examples include market clearing where the goal is to assign as many items of a list of available goods to 
interested buyers, as well as kidney exchange where the goal is to match as many patients and compatible 
donors as possible. 
Many of the above application domains require inherently dynamic models to capture the arrival and departure 
of people, goods, or amenities over time with the goal remaining to match as may pairs as possible.

In this work, we propose a stochastic, discrete-time dynamic graph model in which vertices are born and  stochastically
die over time, and the objective is to find maximum cardinality matchings. 
An instance of the problem is a graph $G = (V,E)$ in which every vertex $v\in V$ arrives in (the morning of) some known day $a_v\in \mathbb{N}$ and will be alive in the graph until (the night of) some day $d_v$ which is a random variable with a known discrete probability distribution on the sample space $\{a_v,a_v+1,\ldots, b_v\}$, where $b_v\in \mathbb{N}, b_v\geq a_v$ is also known. We call $b_v$ the \emph{deadline} of $v$ and $d_v$ its \emph{(actual) death time}. 
A vertex $v$ may become connected via edges only to other vertices that are alive during $v$'s lifetime; those edges exist in the graph only at days of existence of both endpoints. The objective of a maximum cardinality matching translates here into finding as many pairs of vertices that become connected via an ``alive'' edge (at some point in time) and matching those pairs so that no two vertices are matched to the same vertex.

To further motivate our model, consider adverts on YouTube. Youtube has a number of adverts to serve its viewers every day by presenting an ad within one of its videos; one can present an ad at any point until the end of the video, but they do not know exactly when the viewer will change between videos.
Further applications can be found in the dating/matchmaking apps market: suppose a group of people who do not know each other but all use a particular dating app plan a trip to Spain; each of them knows when his/her flight lands in, say, Barcelona and when he/she will go back to their hometown, but with some probability they may leave Barcelona to visit other cities nearby. If they could notify their dating app about their plans, how can the app suggest as many couples' matchings as possible for the duration of their stay in Barcelona?

Any algorithm solving this problem needs to be adaptive in nature, in the sense that it receives the initial information as its input but also learns the evolution of the graph over time and thus may adapt to the new information: we know in advance both the arrival time $a_v$ and the deadline $b_v$ of every vertex $v \in V$ (namely, in the above example, when everyone's flight lands and departs from Barcelona), but the actual death time $d_v$ of $v$ is only revealed to us after the death takes effect (namely, in the above example, we find out if someone took the train from Barcelona to visit, e.g.~Madrid, only after they board the train). Also, although the underlying graph $G$ is known in advance, in general the actual set of edges that become incident to a vertex $v$ during its lifetime can only be known after $v$'s death time. The fact that our algorithms do not know the exact death time of a vertex until the day after it dies is a main difference between our model and previous studies on stochastic and online matchings (see related work in Section~\ref{sec:related_work}).
An (adaptive) algorithm, therefore, has to make decisions adaptively as well; that is, the algorithm may make a \emph{tentative} matching of an alive vertex $v$ to some alive neighbor $u$ (if any), but that decision is subject to change up until the day that the algorithm decides to actually match $v$. Matches (non-tentative ones) once made cannot be revoked. 

Figure~\ref{fig:pi_graph} shows an example of a graph with $4$ vertices, for each of which we know when they arrive and when they will depart \emph{at the latest}. Suppose also that each vertex will die at some point during its [arrival,deadline] interval \emph{uniformly at random} and independently of other vertices. What is the best set of edges that an adaptive algorithm can select to be added in the matching given that information? 

\begin{figure}[h]
	\centering
	\includegraphics[width=.2\textwidth]{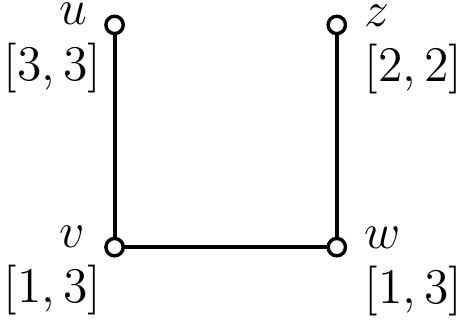}
	\caption{Example of a graph with vertex arrival and departure times.}\label{fig:pi_graph}
\end{figure}

To answer this, we define a \emph{realization} of the stochastic graph (over time) which is a particular evolution of the graph in time, i.e.~a death time per vertex (chosen according to the stochastic model). One can find a maximum matching by viewing this realization as a static graph and computing the maximum matching using known polynomial-time algorithms~\cite{micali_vazirani_matching}. As an example, assume that in the graph of Figure~\ref{fig:pi_graph} both $v$ and $w$ die at time $t=1$; recall that since death times occur at the very end of the day, $v$ and $w$ are connected via the edge $vw$ in day $1$ in this realization, and therefore can be matched. The realization occurs with probability $\nicefrac{1}{9}$ and can be indeed viewed as the static graph containing the single edge $vw$.
Notice that the edge $vw$ is present in any realization of the graph in Figure~\ref{fig:pi_graph}, while $uv$ is present with probability $\nicefrac{1}{3}$ and $wz$ with probability $\nicefrac{2}{3}$.

\subsection*{Our results}
Our results are threefold. Firstly, we formally define the maximum cardinality matching problem  in the  
arrival-departure model with stochastic departures. In addition, we define the price of stochasticity for this problem 
which intuitively captures  the loss of {\em any} algorithm in the worst case in the size of the matching due to the 
uncertainty of the model. Furthermore, we prove that there exist a {\em deterministic} optimal algorithm for the problem.
Second, we show that we can efficiently approximate the expected size of a maximum cardinality matching by deriving  
a fully randomized approximation scheme (FPRAS) for it. The FPRAS is the backbone of a probabilistic algorithm that is
{\em optimal} when the model is defined over two timesteps. Our last result is an upper bound of $\frac{2}{3}$ on the 
price of stochasticity. This means that there is no algorithm that can match more than $\frac{2}{3}$ of the edges of 
an optimal matching in hindsight.

\subsection{Related work}
\label{sec:related_work}
The problem of finding a maximum matching in a graph, i.e.~ a maximum-cardinality set of edges without common vertices, has been studied from a static point of view for many years with different variations regarding the class of graphs considered, or whether the edges are weighted or not; in the former case, the objective is to find matchings of maximum total weight.

Many matching processes, however, are inherently dynamic with participants arriving and matches being created over time. 
Such is the case also in online matchings with relevant literature being relatively recent and focused on online algorithms.
Online Bipartite Matching is the problem where a bipartite graph's left-hand-side is known in advance, while vertices
on the right-hand-side arrive online in an arbitrary order; on the arrival of a vertex, its incident
edges are revealed and the algorithm must irrevocably either match it to one of its unmatched
neighbors or leave it unmatched. Karp et al.~\cite{kvv90} introduced the Ranking algorithm and proved that it is the best possible among online algorithms. Its analysis has since been simplified (see, e.g. Devanur et al.~\cite{DevanurJK13} whose approach also extends to online vertex-weighted integral matching).
Huang et al.~\cite{KTWZZ18} study maximum cardinality matching in a fully online model where \emph{all} vertices arrive online, the incident edges (to previously-arrived vertices) as well as a fixed death time (known to the algorithm) for each vertex is revealed on arrival.
Ashlagi et al.~\cite{ashlagi2018} study the problem of (weighted) matching of agents who arrive at a marketplace over time and leave after $d$ time periods. They provide a $\nicefrac{1}{4}$-competitive algorithm over any sequence of arrivals when there is no a priori information about the weights or arrival times, and show that no algorithm is $\nicefrac{1}{2}$-competitive.
The problem of online market clearing where there is one commodity in the market being bought and sold by multiple buyers and sellers whose bids arrive and expire at different times is studied in~\cite{jacmSZ06}.
Lee and Singla~\cite{LeeS17} give the first positive results on an online matching problem, where edges are revealed in two stages; in each stage one has to immediately and irrevocably extend their matching using the edges from that stage.

Unlike all above-mentioned models, the vertex arrivals in our model are known in advance. However, the death time of a vertex is not fixed/deterministic but is instead a random variable with a discrete probability distribution on the sample space of discrete time steps from its arrival to its deadline. 
Bansal et al.~\cite{BansalGupta12} consider a different stochastic matchings problem: a random graph where each possible edge is present independently with some probability is given, and the goal is to build a large/heavy matching in the randomly generated graph given those probabilities. Unlike our model, they can only find out if an edge is present by querying it, and if it is indeed present in the graph, then they are forced to add it to their matching; their goal is to adaptively query the edges to maximize the expected weight of the matching.

The above literature, as well as this paper, examines inherently dynamic settings for the purpose of finding maximum matchings. The area of dynamic networks in general has flourished in recent years, and the notion of dynamic/temporal graphs is not new. Due to their vast applicability in many areas, temporal graph models have been studied from different perspectives under various names such as time-varying~\cite{krizanc1,flocchiniMS09,TangMML10-ACM}, evolving~\cite{xuan,clementiMMPS10,Ferreira-MANETS-04}, dynamic~\cite{GiakkoupisSS14}, temporal~\cite{AkridaMSZ-JCSS20,akridaTOCS,michailTSP},and graphs over time~\cite{Leskovec-Kleinberg-Faloutsos07}.
Notably, dynamic graphs that evolve stochastically have been studied before, e.g.~for the purpose of determining the speed of information spreading~\cite{clementi_rumor_spreading,clementiMMPS10,AkridaGMS_JPDC}.
For a recent attempt to integrate existing models, concepts, and results see the survey papers~\cite{flocchini1,flocchini2,CasteigtsFloccini12,michailCACM} and the references therein.

\section{Preliminaries}
\label{sec:prelims}

Let $G=(V,E)$ be a simple, undirected, unweighted graph. For every $S \subseteq V$,
let $E(S):= \{uv \in E: u \in S ~\text{and}~ v \in S \}$, i.e., $E(S)$ denote the edges 
of $G$ induced by $S$. Furthermore, for every $X \subseteq E$, let
$V(X):=\{v \in V: \exists u \in V~ \text{such that}~ uv \in X \}$.
Throughout the paper, we assume that every graph $G=(V,E)$ is associated with an 
{\em arrival } function $a: V \rightarrow \mathbb{N}$, and a {\em departure}
function $b: V \rightarrow \mathbb{N}$. 
For every $v \in V$ it holds $a(v) = a_v \leq b(v)=b_v$ and we call 
$[a_v,b_v]$ the lifetime of $v$.
We use $\langle G, a, b \rangle$ to denote this association and we term 
$G$ as the {\em underlying graph}. 
Furthermore, we denote $T = \max_{v \in V}b_v$ and we term it as the 
{\em lifetime} of $G$.

\begin{definition}[Arrival-departure graph]\label{def:ad-graph}
$\langle G=(V,E), a, b \rangle$ is an {\em arrival-departure} graph if for
every $uv \in E$ it holds that $[a_v,b_v] \cap [a_u,b_u] \neq \emptyset$, 
i.e., two vertices can be adjacent if their life intervals intersect.
The {\em realization} of an arrival-departure graph 
$\langle G=(V,E), a, b \rangle$ is a sequence of $T$ induced subgraphs of 
$G$ where the $i$-th subgraph is defined by the set 
$V_i = \{v \in V: a_v \leq i \leq b_v\}$.
\end{definition}

So, an arrival-departure graph is the description of a {\em dynamic} graph
whose set of edges {\em changes over time}. In arrival-departure graphs,
at any time $t \leq T$ an algorithm can perform operations {\em only} on 
the part of the graph that is available at this time. 
Hence, in an arrival-departure graph $\langle G=(V,E), a, b \rangle$, at any
time $t$ any algorithm can operate only on the induced subgraph defined by
$V_t$, which we term {\em snapshot} of $G$ at time $t$.

We complete the definition of arrival-departure graphs by ``endowing'' each 
vertex $v \in V$ with a probability distribution $P_v$ defined on 
$[a_v,b_v]$. Every $P_v$ independent from the probability distributions 
of the rest of the vertices of $V$. $P_v$ defines the {\em death} of vertex $v$. 
If vertex $v$ dies at time $d_v$, then it disappears and thus it does not
belong to any induced subgraph of $G$ after time $d_v$. The crucial point 
in our generalization, is that the death time of each vertex is {\em not}
known in advance but it is revealed {\em only after} it happens.
We formalize the above in the following definition. 

\begin{definition}[Stochastic Arrival-Departure Model]\label{def:stn}
Let $\langle G,a,b\rangle$ be an arrival-departure graph, where $G=(V,E)$,
and let $\pcal = \{ P_v ~|~ v \in V \}$ be a family of independent 
discrete probability distributions, where $P_v$ is defined on $[a_v, b_v]$.
The {\em stochastic arrival-departure model} $\gcal := \langle G,a,b,\pcal \rangle$ 
is the probability space over the set of all possible arrival-departure graphs 
$\langle G, a, b \rangle$ defined by setting $d_v$ according to $P_v$ for every  $v \in V$. 
\end{definition}

An {\em instantiation} of $\gcal$  is a graph $G'$, which is 
a subgraph of $G$, that it is revealed by a sequence of at most $T$ snapshots.
Note that an instantiation is {\em not} the same as a realization. An instantiation of \gcal
is a static graph while a realization is a sequence of subgraphs. Observe that an instantiation
can be produced by more than one realizations.
We denote $\ical_\gcal$ the set of possible instantiations of \gcal and $\pr(I)$ 
the probability that instantiation $I \in \ical_\gcal$ is realized.
Observe, given any stochastic arrival-departure model $\gcal = \langle G,a,b,\pcal \rangle$ 
at any time $t$, any snapshot $s$ at time $t$ of any instantiation of \gcal 
uniquely defines a stochastic arrival-departure model $\gcal'=\gcal(s,t)$; where,
by overloading notation, $s$ denotes the alive vertices of the instantiation at timestep $t$.
Formally, $\gcal(s,t)=\langle G', a', b, \pcal' \rangle$ where $G'=(V',E')$ and
\begin{itemize}
\item $V' :=\{v \in V: a_v \geq t~\text{or}~v \in s\}$;
\item $E' :=\{uv \in E: u \in V'~\text{and}~v \in V' \}$;
\item $a'_v = a_v$ if $v \in V'-s$, and $a'_v = t$ if $v \in s$;
\item $P'_v = P_v$ if $v \in V'-s$, and $P'_v$ is equal to $P_v$ conditioned on the
fact that $v$ is alive until time $t$ if $v \in s$.
\end{itemize}

We assume that every vertex arrives after time 1 and that at time 0 no vertices of $G=(V,E)$ 
are realized. For notation simplicity, we write $\gcal = \gcal(\emptyset, 0)$. Furthermore, 
$\gcal(s,t)=\emptyset$ for every $t>T$.

\begin{figure}[h]
    \centering
    \includegraphics[width=1.0\textwidth]{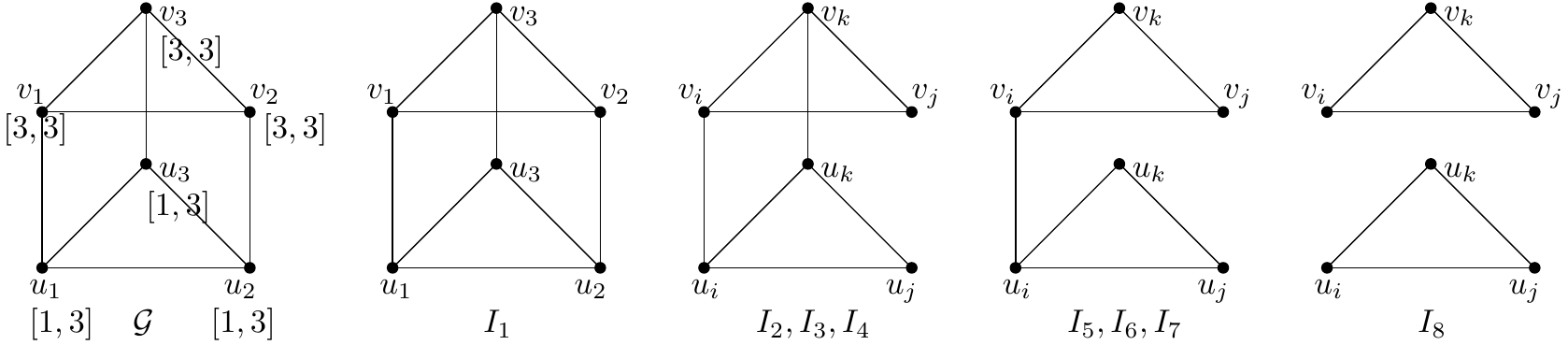}
    \caption{A stochastic arrival-departure model $\gcal$ and its corresponding instantiations. 
    \gcal consists of the vertices $u_1, u_2, u_3$, which arrive at timestep 1 and  departure 
    until timestep 3, and the vertices $v_1, v_2, v_3$, which arrive and departure at timestep 3.
    All vertices $u_i$ have the same probability distribution on their departure time 
    $P_{u_i} = (\varepsilon, \varepsilon, 1-2\varepsilon)$. Thus each vertex $u_i$ with probability 
    $\varepsilon$ departs at timestep 1, $\varepsilon$ departs at timestep 2, and with probability 
    $1-2\varepsilon$ at timestep 3. 
    \gcal has eight possible instantiations $I_1, I_2, \ldots, I_8$ depicted next to \gcal, where
    it holds $\Pr(I_1) = (1-\varepsilon)^3$ and $\Pr(I_2) = \Pr(I_3) =\Pr(I_4) = (1-\varepsilon)^2 \cdot 6 \varepsilon$
    and $\Pr(I_5) = \Pr(I_6) =\Pr(I_7) = (1-\varepsilon)\cdot 12\varepsilon^2$ and $\Pr(I_8) = 8 \varepsilon^3$.}
    \label{fig:STG}
\end{figure}

\subsection{Matching in the Stochastic Arrival-Departure Model}
\label{sec:matching-model}
We study the maximum matching problem in the stochastic arrival-departure model. 
Recall, a matching in a graph is a collection of independent edges.
As already mentioned, given an arrival-departure graph, at any timestep
$t$ any algorithm can operate only on the snapshot $s$ of the instantiation of \gcal. 
A realization of an instantiation from \gcal is revealed to any algorithm as 
follows. 
At time 0, the death time for every vertex $v$ is independently and randomly chosen
according to $P_v$. These death times are {\em unknown} to the algorithm and they 
define an instantiation of \gcal which is revealed to the algorithm
as a sequence of snapshots.
Independently, at each timestep $t$ the algorithm decides which edges available at snapshot 
$s$ to match irrevocably, {\em without} knowing which vertices will be dead at time 
$t+1$. After the algorithm matches a set $M \subseteq E(s)$, the matched vertices, 
$V(M)$, are removed from $s$.
Then, the remaining vertices of $s$ whose death time is $t$ are removed from $s$ 
and the time proceeds to time $t+1$. The new snapshot $s'$ of timestep $t+1$ contains 
all the unmatched vertices of $s$ that have remained alive and the vertices of $G$ that 
arrive at timestep $t+1$.

 We formalize the above mentioned by describing a general adaptive framework 
that captures any matching algorithm in the stochastic arrival-departure model.

\begin{algorithm}[htb]
\caption{General Adaptive Matching Algorithm} \label{alg:gaa}
\begin{algorithmic}[1]
 \REQUIRE{A stochastic arrival-departure model $\gcal =\langle G,a,b, \pcal \rangle$.}
 \ENSURE{A matching $M$ on the instantiation of \gcal.}
\STATE{$s \leftarrow \emptyset$; $M \leftarrow \emptyset$;}
\STATE{Every vertex $v \in V$ randomly and independently samples its death time according to $P_v$;}
\STATE{Let $D_t$ be the set of vertices with death time $t \in [1,T]$;}
\FOR{time step $t \in [1, T]$}
\STATE{$s \leftarrow s \cup S_t$, where $S_t = \{v \in V: a_v = t \}$;}
\STATE{Decide which edges $M_t \subseteq E(s)$ to match irrevocably;} \label{gaa:step-choice}
\STATE{$M \leftarrow M \cup M_t$;}
\STATE{$s \leftarrow s-\{V(M_t) \cup D_t\}$;}
\ENDFOR
\RETURN{Matching $M$;}
\end{algorithmic}
\end{algorithm}

Let \acal be the set of adaptive algorithms that work as described 
in Algorithm~\ref{alg:gaa}.
Fix any adaptive algorithm $A \in \acal$;  $A$ can be deterministic 
or randomized, depending on how it chooses which edges $M_t \subseteq E(s)$ 
to match irrevocably at Step~\ref{gaa:step-choice}.
For every $I \in \ical_\gcal$ we use $A(I)$ to denote the (expected) size 
of the matching $A$ produces on instantiation $I$. 
The {\em performance} of algorithm $A$ on \gcal is defined as 
\begin{align*}
\chi_A(\gcal) := \sum_{I \in \ical_\gcal} \pr(I) \cdot A(I).
\end{align*}

\subsection{Optimal matchings Vs Optimal algorithms}
An optimal matching for $I \in \ical_\gcal$ is a maximum matching of 
$I$ at hindsight; denoted $\opt(I)$. Thus, the expected size of optimal 
matching in \gcal is 
\begin{align*}
\opt(\gcal) := \sum_{I \in \ical_\gcal} \pr(I) \cdot \opt(I).
\end{align*}

In many cases $\opt(\gcal)$ cannot be obtained by any adaptive algorithm. This is because 
of the uncertainty of the realized graph. For this reason, we 
define the {\em optimal performance} of an algorithm for a given \gcal as 
\begin{align*}
\chi^*(\gcal) =  \max_{A \in \acal} \chi_A(\gcal)
\end{align*}
and an algorithm $A \in \acal$ is {\em optimal} for \gcal if 
$\chi^*(\gcal) =  \chi_A(\gcal)$.

Finally, we define the {\em stochasticity ratio} that captures the inefficiency of 
the optimal algorithm due to the stochastic nature of the model as:
\begin{align*}
\phi^* := \min_\gcal \frac{\chi^*(\gcal)}{\opt(\gcal)}.
\end{align*}

Let us demonstrate the notions discussed above on \gcal from Figure~\ref{fig:STG}.
So, we have that $\opt(I_1)=3$ and $\opt(I_i)=2$ for $i= 2, \ldots, 8$. If $\varepsilon \to 0$, then 
it is not hard to verify that the optimal algorithm proceeds as follows. 
\begin{itemize}
\item At timestep 1: no edges  are matched.
\item At timestep 2: if some vertices have died, then it matches any available edge; else it does not 
match any edges.
\item At timestep 3: It chooses a maximum matching for the snapshot of \gcal.
\end{itemize}
The example above shows us for any optimal algorithm it does not suffice to match edges {\em only} when
some new vertices arrive, but it has to consider matching available edges whenever there is a death, or
an arrival of a vertex; see Figure~\ref{fig:STG-counterexample}.
\begin{figure}[h]
    \centering
    \includegraphics[width=0.8\textwidth]{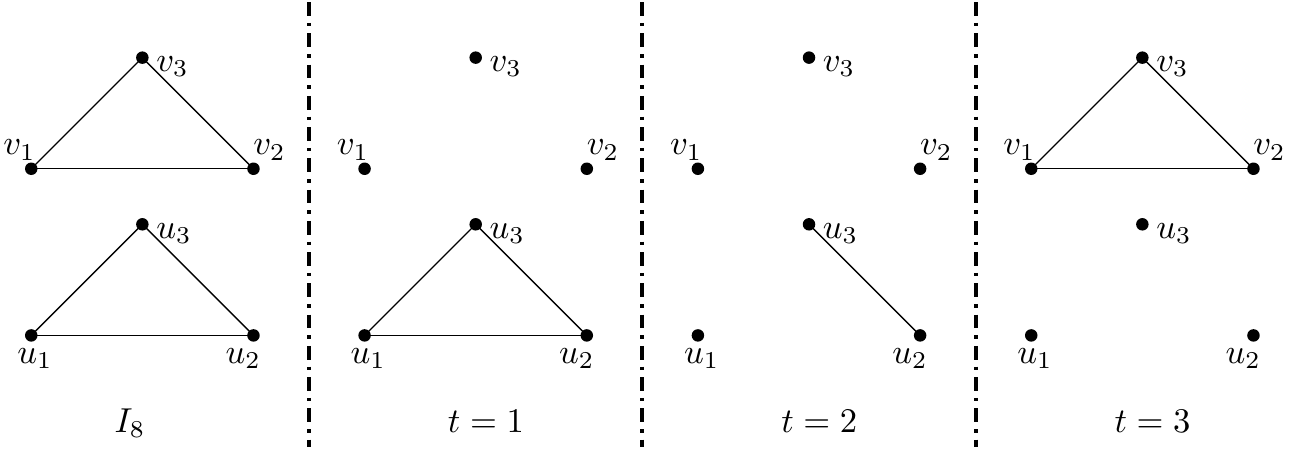}
    \caption{A realization of instantiation $I_8$ from the example of Figure~\ref{fig:STG} where $u_1$ dies after
    $t=1$ and $u_2$ and $u_3$ die at $t=2$. If an algorithm $A$ does not match the edge $u_2u_3$ at $t=2$, i.e. the last
    timestep that is available, then at $t=3$ can match only one edge. Clearly, the optimal algorithm would match it at 
    $t=2$ and be always better than $A$ in this instantiation and at least as good as $A$ in every other instantiation where
    $u_1$ dies at $t=1$.}
    \label{fig:STG-counterexample}
\end{figure}

Next we prove some useful properties for the optimal algorithm.
Fix any $\gcal(s,t)$. Let ${\pr_M(\gcal(s,t), \gcal(s',t+1))}$ denote the probability that at time $t+1$ 
the snapshot $s'$ of \gcal will appear, given that at time $t$ we had the snapshot $s$ and we matched 
$M \subseteq E(s)$.  Then the following lemma holds.

\begin{lemma}
\label{lem:opt-bellman}
For any stochastic arrival-departure model $\gcal(s,t)$ there exists a {\em deterministic} adaptive 
algorithm that is optimal. In addition, it holds that
$$\chi^*(\gcal(s,t)) = \max_{M \subseteq E(s)} \left\{ |M| + \sum_{\gcal(s',t+1)}\pr_{M}(\gcal(s,t), \gcal(s',t+1))\cdot \chi^*(\gcal(s',t+1))
 \right\}.$$
\end{lemma}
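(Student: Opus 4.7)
The plan is to proceed by backward induction on $t$, exploiting the Markov structure of the model. The base case is $t > T$: by convention $\gcal(s,t) = \emptyset$, so $\chi^*(\gcal(s,t)) = 0$ is realized by the empty (deterministic) algorithm, and the Bellman identity holds vacuously (with the maximum taken over $M = \emptyset$). For the inductive step, fix $(s,t)$ with $t \leq T$ and assume the lemma for every $\gcal(s',t+1)$ reachable from $(s,t)$.

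The first step is to establish a Markov property of the model. Because the distributions in $\pcal$ are independent, and because the snapshot $s'$ at time $t+1$ is obtained from $(s,t)$ by first removing $V(M)$, then independently sampling the deaths at time $t$ of the vertices of $s \setminus V(M)$ under their conditioned distributions $P'_v$, and finally adding the deterministic arrivals $S_{t+1}$, the conditional distribution of $s'$ given the entire history up to time $t$ depends only on $(s,t,M)$. Hence the transition probabilities $\pr_M(\gcal(s,t),\gcal(s',t+1))$ are well defined, and conditional on $s'$ the remaining problem is exactly $\gcal(s',t+1)$.

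Now fix any $A \in \acal$ and let $M$ denote the (possibly randomized) set of edges it matches at time $t$. The total matching produced by $A$ decomposes as $|M|$ plus whatever $A$ produces from time $t+1$ onwards; conditional on reaching snapshot $s'$, the latter is at most $\chi^*(\gcal(s',t+1))$ in expectation by the induction hypothesis. Taking expectation first over $s'$ using the Markov step, and then over the algorithm's choice of $M$, yields
\[
\chi_A(\gcal(s,t)) \leq \ebb_M\!\left[|M| + \sum_{s'}\pr_M(\gcal(s,t),\gcal(s',t+1))\cdot\chi^*(\gcal(s',t+1))\right],
\]
which is at most the maximum of the bracketed expression over deterministic choices $M \subseteq E(s)$. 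This gives the ``$\leq$'' direction of the Bellman equation. Equality, together with the existence of a deterministic optimum, is witnessed by the algorithm $A^*$ defined recursively: at $(s,t)$ pick $M$ to be an argmax of the bracket (ties broken by any fixed rule) and on each resulting successor $s'$ follow the deterministic optimum guaranteed by the induction hypothesis for $\gcal(s',t+1)$. By construction, $A^*$ is deterministic and $\chi_{A^*}(\gcal(s,t))$ equals the right-hand side of the Bellman equation.

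The main obstacle will be making the Markov step fully rigorous. One must check that the restriction of any $A \in \acal$ to the subtree of executions passing through $(s',t+1)$ is itself a valid element of $\acal$ on $\gcal(s',t+1)$, and that the conditional distribution of future deaths and arrivals under this restriction is precisely the one induced by $\pcal'$ in Definition~\ref{def:stn}. This reduces to the independence of the $P_v$ together with the conditioning $P'_v = P_v(\cdot \mid d_v \geq t)$ baked into the definition of $\gcal(s,t)$, but spelling it out carefully is the only real work.
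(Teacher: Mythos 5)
Your proposal is correct and follows essentially the same route as the paper: both decompose the performance into the immediate matching $|M|$ plus the expected continuation value, interchange maximization with expectation (your explicit backward induction with a two-sided bound corresponds to the paper's one-step unrolling plus its appeal to convexity to replace the randomized choice of $M$ by a deterministic one), and read off a deterministic optimal algorithm from the argmax. The Markov/independence step you flag as ``the only real work'' is precisely the point the paper leaves implicit when it pushes $\max_{A \in \acal}$ inside the sum over successors $\gcal(s',t+1)$, so your version is a more carefully justified rendering of the same argument rather than a different one.
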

\begin{proof}
Let $A \in \acal$ and let $\pr_A(M)$ denote the probability that algorithm
$A$ chooses to match the edges of matching $M$.
Using the notation introduced above, for every $\gcal(s,t)$, we can express 
$\chi_A(\gcal(s,t))$ as follows.  
\begin{align*}
\chi_A(\gcal(s,t)) 
 & = \sum_{M \subseteq E(s)}\pr_A(M) \cdot \left( |M| + \sum_{\gcal(s',t+1)}\pr_{M}\Big(\gcal(s,t), \gcal(s',t+1)\Big)\cdot \chi_A(\gcal(s',t+1))\right).
\end{align*}

Thus, 
\begin{align}
\nonumber
\chi^*&(\gcal(s,t)) = \max_{A \in \acal} \chi_A(\gcal(s,t)) \\
\nonumber
 & = \max_{A \in \acal} \left\{\sum_{M \subseteq E(s)}\pr_A(M) \cdot \left( |M| + \sum_{\gcal(s',t+1)}\pr_{M}(\gcal(s,t), \gcal(s',t+1))\cdot \chi_A(\gcal(s',t+1))\right)
 \right\} \\
 \label{eq:convexity}
 & = \max_{A \in \acal, M \subseteq E(s)} \left\{ |M| + \sum_{\gcal(s',t+1)}\pr_{M}(\gcal(s,t), \gcal(s',t+1))\cdot \chi_A(\gcal(s',t+1))
 \right\}\\
 \nonumber
 & = \max_{M \subseteq E(s)} \left\{ |M| + \sum_{\gcal(s',t+1)}\pr_{M}(\gcal(s,t), \gcal(s',t+1))\cdot \max_{A \in \acal}\chi_A(\gcal(s',t+1))
 \right\}\\
 \label{eq:chis-def}
 & = \max_{M \subseteq E(s)} \left\{ |M| + \sum_{\gcal(s',t+1)}\pr_{M}(\gcal(s,t), \gcal(s',t+1))\cdot \chi^*(\gcal(s',t+1))
 \right\}.
\end{align}

In the calculations above we get Equation~\eqref{eq:convexity} because of convexity, 
and Equation~\eqref{eq:chis-def} due to the definition of $\chi^*$. In addition, we can see 
that there exists an optimal algorithm where it can choose deterministically a matching $M$ 
in Equation~\eqref{eq:convexity}, hence there exists a deterministic optimal algorithm.
\end{proof}

\section{Approximating $\opt(\gcal)$}
\label{sec:fpras}
In this section we present a fully polynomial randomized approximation scheme 
(FPRAS) for computing $\opt(\gcal)$.  This FPRAS will be the backbone of our
optimal algorithm presented in Section~\ref{sec:alg}.

Recall, an FPRAS for a function $f$ is a randomized algorithm $g$ that, given input $Y$, 
gives an output satisfying \[(1-\varepsilon)\cdot f(Y) \leq g(Y) \leq (1+\varepsilon) \cdot f(Y)\]
with probability at least $\frac{3}{4}$ and has running time polynomial in both $|Y|$ 
and $\nicefrac{1}{\varepsilon}$.  The value of $\frac{3}{4}$ may be rather low in 
practice, but it has been shown that the same class of problems has an FPRAS if we 
choose any probability $\nicefrac{1}{2} <p<1 $~\cite{JerrumVV86}. Furthermore, the 
probability that the result is within a factor of $1 \pm \varepsilon$ of the true value 
can be increased from $\frac{3}{4}$ to $1-\delta$ for any positive $\delta$, just by 
taking the median answer from $O(\log{\frac{1}{\delta}} )$ runs of the algorithm~\cite{JerrumVV86}.

In the following theorem, we derive FPRASs for three different objectives for any \gcal: 
the expected size of an optimal matching, denoted $\opt(\gcal)$;
the expected size of optimal matching given that we match at least some edge $e$ present at timestep 1 in \gcal,
denoted $\opt(\gcal | e)$; 
and the expected size of an optimal matching size given that we do {\em not} match any edge at timestep 1, 
denoted $\opt(\gcal | \emptyset)$.

\begin{theorem}
	For any stochastic arrival-departure model \gcal, there is an FPRAS for
	$\opt(\gcal)$, $\opt(\gcal | e)$, and $\opt(\gcal | \emptyset)$.
\end{theorem}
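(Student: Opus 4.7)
The plan is a Monte Carlo estimator. To approximate $\opt(\gcal)$ I would draw $N$ independent instantiations by sampling each death time $d_v \sim P_v$, compute the maximum matching $\opt(I_k)$ of each sampled $I_k$ in polynomial time via the Micali--Vazirani algorithm, and output $\widehat\mu = \frac{1}{N} \sum_{k=1}^{N} \opt(I_k)$. Since $\ebb[\widehat\mu] = \opt(\gcal)$ and each $\opt(I_k)$ lies in $[0, |V|/2]$, Hoeffding's inequality gives
\[
\pr\!\bigl[\,|\widehat\mu - \opt(\gcal)| > \varepsilon\, \opt(\gcal)\,\bigr] \;\leq\; 2\exp\!\Bigl(-\Omega\bigl(N\,\varepsilon^{2}\,\opt(\gcal)^{2}/|V|^{2}\bigr)\Bigr),
\]
so $N = O\!\bigl(|V|^{2}/(\varepsilon^{2}\,\opt(\gcal)^{2})\bigr)$ samples suffice to produce a $(1\pm\varepsilon)$-approximation with probability at least $3/4$; the confidence is then amplified to $1-\delta$ by the standard median-of-runs trick recalled just before the theorem.

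The two remaining quantities reduce to the first. For $\opt(\gcal \mid e)$ with $e = uv$ present at time $1$, deciding to match $e$ deletes $u$ and $v$ from the future evolution, so $\opt(\gcal \mid e) = 1 + \opt(\gcal_{-\{u,v\}})$, where $\gcal_{-\{u,v\}}$ is the sub-model obtained by removing $u$ and $v$; applying the Monte Carlo scheme to $\gcal_{-\{u,v\}}$ and adding $1$ gives an FPRAS. For $\opt(\gcal \mid \emptyset)$ I would use the transition $\gcal(s',2)$ from Section~\ref{sec:prelims}: sample the death outcomes at time $1$ to obtain a snapshot $s'$ at time $2$ with its conditional distributions $P'_v$, and then run the $\opt(\cdot)$-estimator on $\gcal(s',2)$. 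The same Hoeffding argument applies, and sampling from the induced distribution over second-step snapshots adds only polynomial overhead.

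The main obstacle I anticipate is ensuring that the sample count $N$ is polynomial, which by the argument above requires a polynomial lower bound on $\opt(\gcal)$ whenever it is strictly positive. The trivial case $\opt(\gcal) = 0$ is detectable in polynomial time by checking, for every edge $uv \in E$, whether both $\sum_{t \geq \max(a_u,a_v)} P_u(t)$ and the analogous sum for $v$ are positive; if no such edge exists, output $0$. In the positive case the naive bound $\opt(\gcal) \geq \max_e \pr[e \in I]$ can be exponentially small in the bit length of $\pcal$, so a plain Hoeffding-based estimator is not good enough. I would therefore close the gap with a variance-based stopping-rule FPRAS along the lines of Dagum--Karp--Luby--Ross, whose sample count scales with $|V|/(\varepsilon^{2}\,\opt(\gcal))$ (exploiting $\mathrm{Var}(\opt(I)) \leq (|V|/2)\cdot\opt(\gcal)$) rather than $|V|^{2}/(\varepsilon^{2}\,\opt(\gcal)^{2})$, combined with a preprocessing step that identifies a ``witness'' sub-structure whose expected matching contribution is at least $1/\mathrm{poly}(|\gcal|)$; making this witness argument work uniformly across all stochastic arrival-departure models is, in my view, the technical heart of the proof.
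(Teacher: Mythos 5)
Your proposal is essentially the paper's proof: the same Monte Carlo estimator (sample instantiations, compute a maximum matching of each in polynomial time, average), the same median-of-runs amplification, and the same two reductions for the conditional quantities ($\opt(\gcal \mid e)$ equals $1$ plus the estimate on the model with $u$ and $v$ deleted; $\opt(\gcal \mid \emptyset)$ obtained by discarding the vertices that do not survive past timestep~1). The only substantive divergence is the concentration step and the small-expectation regime. The paper uses Chebyshev with the crude bound $\sigma(X) \le \lfloor n/2 \rfloor + 1$ and $k = n^4/\varepsilon^2$ samples \emph{under the assumption} $E[X] \ge 1$, and dispatches the remaining case by a threshold test that simply outputs $0$ whenever $\Pr[X=0] > 1 - \frac{1}{n}$. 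Your worry that $\opt(\gcal)$ can be strictly positive yet exponentially small in the bit length of $\pcal$ --- so that no polynomial number of naive samples yields a multiplicative guarantee --- is well founded, and it is not actually resolved by the paper either: outputting $0$ is not a $(1\pm\varepsilon)$-approximation of a positive quantity. Your proposed stopping-rule estimator with a polynomial witness lower bound would genuinely strengthen the result, but, as you acknowledge, that witness argument is left open; to match the paper's argument it suffices to adopt its implicit restriction to instances with $\opt(\gcal) \ge 1$ (or an additive-error convention below that threshold), at which point your Hoeffding-based sample bound is at least as good as the paper's Chebyshev-based one.
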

\begin{proof}
We begin by explaining the algorithm for approximating $\opt(\gcal)$, and follow with two 
slight adaptations of it to allow for computing $\opt(\gcal | e)$ and $\opt(\gcal | \emptyset)$.
	
So, let $\gcal := \langle G,a,b,\pcal \rangle$ and let $G=(V,E)$.
Sample an instantiation $I \in \ical_\gcal$ of \gcal and compute a maximum matching $M_I$ for it.
This can be clearly done in polynomial time.
Let $X=|M_I|$. Clearly, $\opt(\gcal) = E[X] = \sum_{I \in {\ical_\gcal}} \pr(I)*\cdot |M_I|$. 
In addition, since $X$ is the size of a matching, it clearly holds that
	\begin{equation}\label{eq:estimator_variance}
		\sigma(X) \leq \lfloor |V|/2 \rfloor +1 .
	\end{equation}
	
We perform the above experiment for $X$ independently $k$ times. Let $X_1, \ldots, X_k$ 
be the respective maximum matching sizes and consider the estimator 
$X(k) = \frac{X_1+\ldots+X_k}{k}$.  Notice that $X(k)$ is unbiased, since $E[X(k)] = E[X]$. 
We also have that $\sigma(X(k)) =\frac{\sigma(X)}{\sqrt{k}}$; see~\cite{vazirani_approx}.
	
	So, for any $\varepsilon>0$, it holds:
	\begin{eqnarray}
		\notag \Pr\Big[ | X(k) - E[X(k)] | \geq \varepsilon \cdot |X(k)| \Big] &\leq& \left( \frac{\sigma\left( X(k) \right)}{\varepsilon E[X(k)]} \right)^2 \\
		 \notag &=& \left( \frac{\sigma\left( X \right)}{\varepsilon \cdot \sqrt{k} \cdot E[X]} \right)^2 .
	\end{eqnarray}
	
	The latter, assuming $E[X] \geq 1$ and for $k=\frac{1}{\varepsilon^2}\cdot n^4$, gives: \[Pr\Big[ | X(k) - E[X(k)] | \geq \varepsilon \cdot |X(k)| \Big ] \leq \left( \frac{\lfloor \frac{n}{2} \rfloor  +1}{\varepsilon \cdot \sqrt{k}} \right)^2 \leq  \frac{1}{n^2}  . \]
	
	It could be the case that $E[X] \leq 1$; indeed, if no edge appears in a realization, then the maximum matching size is $X=0$. The probability $Pr[X=0]$ is bounded above by the probability that no edge appears in a possible maximum matching having edges of the highest possible probability of occurrence. The latter can be easily calculated in any particular given \gcal; if this probability is greater than $1-\frac{1}{n}$ then the algorithm outputs the estimator $X=0$. Otherwise, it outputs $X(k)$.
	
	We now proceed with the adaptation of the algorithm to approximate $\opt(\gcal | e)$.
	Before sampling $X$, we remove $e$ and all adjacent edges from the given graph. In the resulting graph $G'$ we perform the experiment for $X$ as before. The estimator for $\opt(\gcal | e)$ is $1$ plus the estimator for $X$ in $G'$. Notice that $G'$ has $n-2$ vertices so Equation~\ref{eq:estimator_variance} still holds and our analysis follows.
	
	Finally, to approximate $\opt(\gcal | \emptyset)$, we make the following adjustment to the algorithm 
	for $\opt(\gcal)$. For each of the $k$ experiments for $X$, we remove from the produced realization all vertices that are present {\em only} on timestep 1. The estimator for $\opt(\gcal | \emptyset)$ is the estimator for $X$ in the resulting graphs. We get our result by  noticing that Equation~\ref{eq:estimator_variance} holds again, since each instantiation has at most $n$ vertices.
	
\end{proof}
\section{An optimal algorithm for two timesteps}
\label{sec:alg}

In this section we derive a probabilistic, polynomial-time optimal algorithm for 
stochastic arrival-departure models with two timesteps. Our algorithm utilizes 
the FPRAS from the previous section. 

A first attempt to derive an optimal algorithm would be to estimate the value of the optimal 
matching given that we match some edges at timestep 1 and match the set of edges that 
maximize this value. However, even if this approach was correct, we would have to evaluate 
an exponential number of subsets of edges, which is inefficient. On the other hand, we observe 
that we do not have to check all the edges of a matching simultaneously, but we can {\em create}
a matching by adding edges one by one. Hence, we propose the following algorithm to use at Step
\ref{gaa:step-choice} of Algorithm~\ref{alg:gaa}.

\begin{algorithm}[htb]
\caption{Split-Matching Algorithm} \label{alg:split-match}
\begin{algorithmic}[1]
 \REQUIRE{A snapshot $s$ at time $t$ of $\gcal =\langle G=(V,E),a,b, \pcal \rangle$.}
 \ENSURE{A matching $M_t \subseteq E(s)$.}
\STATE{$M_t \leftarrow \emptyset$; $\text{advance} \leftarrow 0$; $s' \leftarrow s$;}
\WHILE{$\text{advance} = 0$}
\STATE{Compute $\opt^*(\gcal(s,t)|uv)$ for every edge $uv \in E(s)$; 
Compute $\opt(\gcal(s,t)|\emptyset)$;}
\IF{$\opt(\gcal(s,t)|\emptyset) > \max_{uv \in E(s)}\opt^*(\gcal(s,t)|uv)$}
\STATE{$\text{advance} \leftarrow 1$}
\ELSE
\STATE{$u^*v^* = \argmax_{uv \in E(s)}\opt^*(\gcal(s,t)|uv)$;}
\STATE{$M_t \leftarrow M_t \cup u^*v^*$;}
\STATE{$s' \rightarrow s' - \{u^*, v^*\}$;}
\ENDIF
\ENDWHILE
\end{algorithmic}
\end{algorithm}

\begin{lemma}
\label{lem:same-match}
For any $\gcal(s,t)$, with high probability, Algorithm~\ref{alg:split-match} will choose a matching $M_t$ such that
$|M_t| + \sum_{\gcal(s',t+1)}\pr_{M_t}(\gcal(s,t), \gcal(s',t+1))\cdot \opt(\gcal(s',t+1))$ will be equal to
$\max_{M \subseteq E(s)} \left\{ |M| + \sum_{\gcal(s',t+1)}\pr_{M}(\gcal(s,t), \gcal(s',t+1))\cdot \opt(\gcal(s',t+1))\right\}$.
\end{lemma}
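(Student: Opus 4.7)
The plan is to prove the lemma by induction on the snapshot size $|s|$. Write $V(M) := |M| + \sum_{\gcal(s',t+1)} \pr_M(\gcal(s,t),\gcal(s',t+1)) \cdot \opt(\gcal(s',t+1))$ and $F(\gcal(s,t)) := \max_{M \subseteq E(s)} V(M)$; the lemma asks exactly that the output satisfy $V(M_t) = F(\gcal(s,t))$.

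First, I would establish the Bellman-style recursion
$F(\gcal(s,t)) = \max\bigl( V(\emptyset),\ \max_{uv \in E(s)} [1 + F(\gcal(s\setminus\{u,v\},t))] \bigr)$,
which follows by decomposing any non-empty matching $M$ into a first edge $uv$ and the rest $M \setminus \{uv\}$, itself a matching in the sub-instance $\gcal(s \setminus \{u,v\},t)$; by definition of $V$ one has $V(M) = 1 + V_{\gcal(s\setminus\{u,v\},t)}(M\setminus\{uv\})$. I would then connect the FPRAS quantities used by Algorithm~\ref{alg:split-match} to this recursion: in the two-timestep regime, $\opt(\gcal(s,t)|\emptyset) = V(\emptyset)$, since forbidding any time-$t$ match is the same as deferring every matching to the final snapshot, and $\opt(\gcal(s,t)|uv) = 1 + \opt(\gcal(s\setminus\{u,v\},t))$.

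Second, I would carry out the induction. If the stopping test of the while-loop fires on entry, then $V(\emptyset) > \max_{uv}\opt^*(\gcal|uv) \geq \max_{uv}\bigl[1 + F(\gcal(s\setminus\{u,v\},t))\bigr]$ (using $\opt \geq F$ together with FPRAS accuracy), so the recursion forces $F(\gcal(s,t)) = V(\emptyset) = V(M_t)$. Otherwise the algorithm commits to $u^*v^* := \argmax_{uv}\opt^*(\gcal|uv)$ and recurses on $\gcal(s\setminus\{u^*,v^*\},t)$; by the inductive hypothesis the recursive call achieves $F(\gcal(s\setminus\{u^*,v^*\},t))$, so $V(M_t) = 1 + F(\gcal(s\setminus\{u^*,v^*\},t))$, and the remaining task is to show that this matches the outer $F(\gcal(s,t))$.

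The hardest step is this last equality: the algorithm ranks candidate edges by (the FPRAS estimate of) the hindsight value $\opt(\gcal|uv)$, whereas the recursion ranks them by $F(\gcal\setminus\{u,v\})$, and these two rankings need not coincide pointwise. I would close the gap by an exchange argument: fix any optimal $M^*$ attaining $F(\gcal(s,t))$; if $u^*v^* \notin M^*$, replace in $M^*$ the at most two edges incident to $u^*$ or $v^*$ with $u^*v^*$, and use linearity of expectation together with the monotonicity of maximum matching under vertex removal to show that $V$ cannot strictly drop, producing an equally good $M^{**}$ that contains $u^*v^*$ and to which the inductive hypothesis applies directly. Finally, the FPRAS error is absorbed by a standard union bound over the polynomially many FPRAS calls performed by the algorithm, boosted via the median-of-runs technique of~\cite{JerrumVV86} so that every $\argmax$ and every stopping comparison is decided correctly with probability $1-o(1)$; this yields the ``with high probability'' conclusion of the lemma.
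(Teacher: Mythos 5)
Your probabilistic bookkeeping (a union bound over the polynomially many FPRAS calls, boosted so that every comparison and every $\argmax$ is decided correctly) is exactly what the paper does, and that part is fine. The structural part, however, contains a genuine gap, and it sits precisely at the step you yourself flag as the hardest. Your exchange argument asserts that if $u^*v^* \notin M^*$ for an optimal $M^*$, one can swap $u^*v^*$ in for the at most two edges of $M^*$ incident to $u^*$ or $v^*$ and ``use linearity of expectation together with the monotonicity of maximum matching under vertex removal to show that $V$ cannot strictly drop.'' This does not follow. In the case where $M^*$ contains two edges $u^*a$ and $v^*b$, the swap loses one matched edge and returns $a,b$ to the residual pool, so you need the expected residual optimum to rise by at least $1$; in the case where $M^*$ covers neither $u^*$ nor $v^*$, adding $u^*v^*$ removes two vertices from the residual pool, which can lower the expected residual optimum by up to $2$ against a gain of $1$. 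Monotonicity gives inequalities of the wrong direction or the wrong magnitude; any correct argument must actually exploit the greedy optimality of $u^*v^*$, i.e.\ that $\opt(\gcal \mid u^*v^*) \geq \opt(\gcal \mid u^*a)$ and so on. There is also a second mismatch that you notice but do not resolve: the algorithm ranks edges by the \emph{hindsight} quantity $\opt(\gcal \mid uv)$, whose residual term may still match pairs of vertices that both die at time $t$, whereas the objective in the lemma credits the residual only with $\opt(\gcal(s',t+1))$, i.e.\ with what survives to time $t+1$. These two rankings can differ, and nothing in your sketch bridges them.

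For comparison, the paper's own proof is far shorter than yours and, frankly, does not supply this missing piece either: it defines the family $\mcal_t$ of optimal matchings, \emph{asserts} that the edge-by-edge greedy lands in $\mcal_t$ whenever all FPRAS estimates are accurate, and then bounds the failure probability by the number of greedy steps times the FPRAS error. So your proposal is more ambitious in that it tries to prove the combinatorial core of the lemma (via a Bellman recursion and an exchange step) rather than assume it, but as written it does not succeed; if you can make the exchange rigorous --- deriving the needed inequality from the maximality of $\opt(\gcal \mid u^*v^*)$ and reconciling the hindsight versus residual objectives --- you would obtain a proof strictly stronger than the one in the paper.
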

\begin{proof}
Let $\mcal_t$ be the set of optimal matchings for timestep $t$ that maximizes the expected value of the matching 
for the stochastic arrival-departure model $\gcal(s,t)$. In other words, any matching in $\mcal_t$ is a ``correct''
choice at this timestep. Algorithm~\ref{alg:split-match}, with high probability, will produce a matching in $\mcal_t$ by
choosing one edge at a time. The probability to make a wrong choice is bounded by the maximum size of any matching in 
$\mcal_t$, which is at most $\frac{n}{2}$, times the error of the FPRAS. The lemma follows, since we can choose the accuracy of
the FPRAS.
\end{proof}

With Lemma~\ref{lem:same-match} in hand we can state and prove our main theorem.
\begin{theorem}
In stochastic arrival-departure models with two timesteps, Algorithm~\ref{alg:gaa} using 
Algorithm~\ref{alg:split-match} gives an optimal algorithm with high probability.
\end{theorem}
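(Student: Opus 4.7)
The plan is to view Algorithm~\ref{alg:gaa} equipped with Algorithm~\ref{alg:split-match} as implementing the Bellman-style recursion of Lemma~\ref{lem:opt-bellman} one timestep at a time, and to argue inductively from the last timestep backward to the first. The reason the argument is restricted to two timesteps is that Lemma~\ref{lem:same-match} only certifies that Algorithm~\ref{alg:split-match} attains the maximum in Lemma~\ref{lem:opt-bellman} with $\opt$ in place of $\chi^*$; these two quantities coincide only at the final timestep, which with $T=2$ is just one inductive step away from the start.

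The first step is the base case at $t=2$. Since $T=2$, no vertex arrives after $t=2$ and the snapshot $s_2$ already determines every remaining instantiation of $\gcal(s_2,2)$; hence $\opt(\gcal(s_2,2))$ equals $M^\star(s_2)$, the size of a maximum matching of the static snapshot $s_2$. Moreover, $\gcal(s',3) = \emptyset$, so Lemma~\ref{lem:opt-bellman} collapses to $\chi^*(\gcal(s_2,2)) = \max_{M \subseteq E(s_2)} |M| = M^\star(s_2)$. Applying Lemma~\ref{lem:same-match} at $t=2$, Algorithm~\ref{alg:split-match} with high probability returns a matching $M_2$ of size $M^\star(s_2)$, so its performance on $\gcal(s_2,2)$ equals $\chi^*(\gcal(s_2,2)) = \opt(\gcal(s_2,2))$.

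Next comes the inductive step at $t=1$. Lemma~\ref{lem:opt-bellman} gives
\begin{align*}
\chi^*(\gcal) = \max_{M \subseteq E(s_1)} \Bigl\{ |M| + \sum_{\gcal(s',2)} \pr_M\bigl(\gcal,\gcal(s',2)\bigr) \cdot \chi^*(\gcal(s',2)) \Bigr\}.
\end{align*}
The base case supplies $\chi^*(\gcal(s',2)) = \opt(\gcal(s',2))$ for every $s'$, so the right-hand side is exactly the quantity Lemma~\ref{lem:same-match} certifies Algorithm~\ref{alg:split-match} attains with high probability at $t=1$. Composing this guarantee with the base-case guarantee at $t=2$ shows that Algorithm~\ref{alg:gaa} achieves $\chi^*(\gcal)$ with high probability, i.e., is optimal.

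The main obstacle is controlling the cumulative failure probability over the many FPRAS invocations. Across the two timesteps, Algorithm~\ref{alg:split-match} performs at most $O(n)$ greedy iterations, each invoking the FPRAS $O(n^2)$ times, for $O(n^3)$ calls in total. Sharpening the FPRAS success probability to $1 - 1/n^c$ for a suitably large constant $c$ via the median-amplification remark following the FPRAS theorem, a union bound pushes the overall failure probability below $1/\mathrm{poly}(n)$. Everything else is bookkeeping: checking that an FPRAS multiplicative error $\varepsilon$ chosen polynomially small in $n$ does not flip any of the comparisons that matter inside Algorithm~\ref{alg:split-match}, exactly as in the proof of Lemma~\ref{lem:same-match}.
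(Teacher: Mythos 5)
Your proof is correct and follows essentially the same route as the paper: instantiate the Bellman recursion of Lemma~\ref{lem:opt-bellman} at $t=1$, observe that at the final timestep $\chi^*$ coincides with $\opt$, and invoke Lemma~\ref{lem:same-match} to conclude that Algorithm~\ref{alg:split-match} attains the resulting maximum with high probability. Your additional bookkeeping (the explicit base case at $t=2$ and the union bound over the $O(n^3)$ FPRAS invocations) only makes explicit what the paper leaves implicit.
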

\begin{proof}
Let $\gcal(s,t)$ be a stochastic arrival-departure models with two timesteps.
Recall, that for an optimal algorithm it holds that 
$$\chi^*(\gcal(s,t)) = \max_{M \subseteq E(s)} \left\{ |M| + \sum_{\gcal(s',t+1)}\pr_{M}(\gcal(s,t), \gcal(s',t+1))\cdot \chi^*(\gcal(s',t+1))\right\}.$$ In addition, at the last timestep $t^*$ it holds that $\chi^*(\gcal(s',t+1)) = \opt(\gcal(s',t+1))$. Hence, when there are only two timesteps, i.e., $t \in \{1, 2\}$ we have that
\begin{align*}
\chi^*(\gcal(s,1)) 
& = \max_{M \subseteq E(1)} \left\{ |M| + \sum_{\gcal(s',2)}\pr_{M}(\gcal(s,1), \gcal(s',2))\cdot \chi^*(\gcal(s',2))\right\} \\
& = \max_{M \subseteq E(1)} \left\{ |M| + \sum_{\gcal(s',2)}\pr_{M}(\gcal(s,1), \gcal(s',2))\cdot \opt(\gcal(s',2)) \right\}\\
& =  \max_{M \subseteq E(1)} \left\{ |M| + \opt(\gcal(s',2)| M)\right\}.
\end{align*}
Observe that, due to Lemma~\ref{lem:same-match}, the last equation is exactly what Algorithm~\ref{alg:split-match} 
chooses to match at the first timestep. Hence, the theorem follows.
\end{proof}

\section{An upper bound on the price of stochasticity}
\label{sec:negative}

In this section we prove the following theorem

\begin{theorem}
The price of stochasticity is at most $\frac{2}{3}$.
\end{theorem}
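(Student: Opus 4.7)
The plan is to prove $\phi^* \le \tfrac{2}{3}$ by exhibiting a family $\{\gcal_n\}_{n}$ of two-timestep stochastic arrival-departure models whose ratios $\chi^*(\gcal_n)/\opt(\gcal_n)$ decrease to $\tfrac{2}{3}$ as $n\to\infty$; since $\phi^*$ is an infimum over all models, this suffices. For an even integer $n$, the model $\gcal_n$ has $2n$ vertices $U=\{u_1,\dots,u_n\}$ and $V=\{v_1,\dots,v_n\}$; its underlying graph is the clique on $U$ together with the ``pendant'' matching $\{u_iv_i:i\in[n]\}$. Each $u_i$ arrives at time $1$ with deadline $2$ and has a uniform death distribution, namely $\pr[d_{u_i}=1]=\pr[d_{u_i}=2]=\tfrac{1}{2}$; each $v_i$ arrives and departs deterministically at time $2$. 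This satisfies the arrival-departure condition because all $u_iu_j$ share time $1$, and each $u_iv_i$ can share time $2$ if $u_i$ survives.

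To upper bound $\chi^*(\gcal_n)$ I would first note that at time $1$ the snapshot is the clique on $U$ (no death has yet been revealed), so any adaptive algorithm must commit at this stage to some matching $M_1$ of size $m$ without any stochastic information. After doing so, each of the $n-2m$ remaining $u_i$'s independently survives to time $2$ with probability $\tfrac{1}{2}$; letting $s$ denote the (random) number of survivors, the time-$2$ snapshot is a clique on these survivors together with their $s$ pendants, whose maximum matching has size exactly $s$ (pair each survivor with its unique pendant $v_i$). Playing optimally at time $2$ therefore yields expected total value
\[
m + \ebb[s] \;=\; m + \tfrac{1}{2}(n-2m) \;=\; \tfrac{n}{2},
\]
\emph{independently} of the choice of $m$. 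Since this value coincides for every pure deterministic commitment, any randomized mixture still scores $n/2$, and by \Cref{lem:opt-bellman} we conclude $\chi^*(\gcal_n)=\tfrac{n}{2}$.

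For $\opt(\gcal_n)$ I would argue that in any realization with $k$ surviving $u_i$'s the static instantiation is a clique on $U$ augmented by $k$ pendants, whose maximum matching has size $k + \lfloor (n-k)/2\rfloor = \lfloor (n+k)/2\rfloor$ (match each surviving $u_i$ to $v_i$ and pair the remaining $n-k$ non-survivors inside the clique; a short exchange argument rules out anything larger because each pendant is reachable only through its unique $u_i$). Taking expectation with $k\sim\mathrm{Bin}(n,\tfrac{1}{2})$, I use $\ebb[k]=n/2$ together with the parity identity $\pr[k\text{ odd}]=\tfrac{1}{2}$ to get $\opt(\gcal_n)=\tfrac{3n}{4}-\tfrac{1}{4}$. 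The ratio is therefore
\[
\frac{\chi^*(\gcal_n)}{\opt(\gcal_n)} \;=\; \frac{n/2}{3n/4-1/4} \;=\; \frac{2n}{3n-1},
\]
which decreases to $\tfrac{2}{3}$ as $n\to\infty$, establishing $\phi^*\le \tfrac{2}{3}$.

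The main obstacle is securing the exact cancellation $m+\tfrac{1}{2}(n-2m)\equiv\tfrac{n}{2}$ that occurs precisely at $p=\tfrac{1}{2}$: this is what renders the algorithm powerless, because every possible time-$1$ commitment yields the same expected payoff while the hindsight optimum still gains $\approx \tfrac{3n}{4}$ from the pendants. A secondary technical point is controlling $\opt$ exactly rather than only asymptotically, but the parity identity for the symmetric binomial handles this cleanly; one should check that other values of $p$ or smaller $n$ give strictly larger ratios than $\tfrac{2}{3}$, so the symmetric, asymptotic construction is essentially tight for this gadget.
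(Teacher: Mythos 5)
Your construction is exactly the paper's gadget $\scal_n$ (a clique of $n$ vertices alive on day $1$ with survival probability $\tfrac12$, plus $n$ pendants appearing only on day $2$), and your computations $\chi^*(\gcal_n)=\tfrac{n}{2}$, $\opt(\gcal_n)=\tfrac{3n-1}{4}$, and ratio $\tfrac{2n}{3n-1}\to\tfrac23$ match the paper's proof step for step, including the appeal to \cref{lem:opt-bellman} to reduce to deterministic day-$1$ commitments. The argument is correct and essentially identical to the paper's.
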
 

We will prove our theorem by creating a specific arrival-departure model and then we
will derive the exact value.

Let us denote by $S_n = (V,E)$ the graph on $2n$ vertices 
$\ell_1, \ell_2, \ldots, \ell_n, u_1, u_2, \ldots, u_n$, where 
$L = \{ \ell_1, \ell_2, \ldots, \ell_n \}$ is a clique, $U  = \{ u_1, u_2, \ldots, u_n \}$ is an
 independent set, and $\ell_iu_i \in E$ for every $i \in [n]$, and there are no other edges in $S_n$.

Let $\scal_n$ be a stochastic arrival-departure model with the underlying arrival-departure graph
$\langle S_n,a,b \rangle$, where $a_{\ell} = 1, b_{\ell} = 2$ for every $\ell \in L$, and 
$a_{u} = 2, b_{u} = 2$ for every $u \in U$.
In other words, all vertices of $L$ arrive at timestep 1, all vertices of $U$ arrive at timestep 2, and 
every vertex in $L$ dies at timestep 1 with probability 1/2.
Notice that any instantiation of $\scal_n$ is uniquely defined by a set $C \subseteq L$ of vertices 
that survive until day 2.
We denote the corresponding instantiation of $\scal_n$ by $I_{C}$. It is easy to see that $\opt(I_{C})$
depends only on the number $k = |C|$ of survived vertices and is equal to $k + \lfloor(n-k)/2\rfloor$.

In order to compute $\phi^*(\scal_n)$ it is convenient to introduce for every $v \in L$ an indicator random 
variable $X_v$ which is equal to 1 if and only if $v$ survives until day 2. Let $Y = \sum_{v \in L} X_v$. 
Then $\opt(I) = Y(I) +  \lfloor (n-Y(I))/2 \rfloor$, and therefore

\begin{equation}\label{eq:optSn}
	\begin{split}
		\opt(\scal_n) &= \ebb_{\scal_n}[\opt] = \ebb_{\scal_n}[Y +  \lfloor (n-Y)/2 \rfloor] 
		= \frac{n}{2} + \ebb_{\gcal}[\lfloor (n-Y)/2 \rfloor] =\\
		&= \frac{n}{2} + \frac{n-n/2}{2} - \frac{1}{4} = \frac{3n - 1}{4}.
	\end{split}
\end{equation}

In the next lemma we show that for any $\epsilon > 0$ no optimal algorithm can achieve 
$2/3 + \epsilon$ of $\opt(\scal_n)$ when $n$ goes to infinity. Hence, our theorem will follow.

\begin{lemma}\label{lem:phi_upper}
	$\lim\limits_{n \to \infty} \phi^*(\scal_n) = 2/3$.
\end{lemma}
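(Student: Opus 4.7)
The plan is to compute $\chi^*(\scal_n)$ exactly and then verify the limit of the ratio. By Lemma~\ref{lem:opt-bellman} it suffices to restrict attention to deterministic adaptive algorithms. At timestep $1$ the snapshot is just the clique $L$ on $n$ vertices and no randomness has yet been revealed, so an algorithm has nothing to adapt to: it commits to some matching $M_1 \subseteq E(L)$ of some size $m \in \{0, 1, \ldots, \lfloor n/2 \rfloor\}$. After this commitment the $2m$ matched vertices leave the snapshot, and each of the remaining $n-2m$ vertices of $L$ survives to timestep $2$ independently with probability $1/2$.

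At timestep $2$, which is the last timestep, the Bellman-type identity of Lemma~\ref{lem:opt-bellman} reduces to computing a maximum matching in the current snapshot. If $k'$ vertices of $L$ have survived, that snapshot is a clique on those $k'$ vertices together with all of $U$, where the only edges incident to $U$ are the pendants $\ell_i u_i$ for each surviving $\ell_i$. A maximum matching has size exactly $k'$: pairing each surviving $\ell_i$ with its pendant $u_i$ realizes $k'$ matches, and no matching can exceed this because every edge of the snapshot touches $L$ and only $k'$ vertices of $L$ are present. Consequently the expected total size of the matching returned by the algorithm is
\begin{equation*}
m + \ebb[k'] \;=\; m + \frac{n-2m}{2} \;=\; \frac{n}{2},
\end{equation*}
which is independent of $m$. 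Hence every deterministic algorithm achieves $n/2$, and by taking convex combinations so does every randomized algorithm; we conclude $\chi^*(\scal_n) = n/2$.

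Combining this with equation~\eqref{eq:optSn} gives
\begin{equation*}
\phi^*(\scal_n) \;=\; \frac{\chi^*(\scal_n)}{\opt(\scal_n)} \;=\; \frac{n/2}{(3n-1)/4} \;=\; \frac{2n}{3n-1},
\end{equation*}
which tends to $2/3$ as $n \to \infty$, as required.

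The only conceptually delicate step is the invariance of the expected output in $m$: intuitively, every clique edge matched preemptively at timestep $1$ gains one guaranteed match but destroys two pendant opportunities at timestep $2$ whose combined expected contribution is exactly $1$. This perfect cancellation is what pins down $\chi^*(\scal_n)$ and leaves no room for adaptivity or randomization to help; the $2/3$ gap to the hindsight optimum then follows from the simple arithmetic above.
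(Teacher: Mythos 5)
Your proposal is correct and follows essentially the same route as the paper: reduce to a deterministic algorithm via Lemma~\ref{lem:opt-bellman}, observe that any such algorithm commits to a matching of some size $m$ inside the clique $L$ at day~1 and then collects one pendant edge per surviving unmatched vertex of $L$, so its expected output is $m + (n-2m)/2 = n/2$, and divide by $\opt(\scal_n) = (3n-1)/4$. Your added justifications (that the day-2 maximum matching has size exactly $k'$ since every edge meets $L$, and the convexity remark covering randomized algorithms) only make explicit what the paper leaves implicit.
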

\begin{proof}
	Let $A \in \mathcal{A}$ be an optimal deterministic algorithm, which exists by Theorem~\ref{lem:opt-bellman}.
	Since $A$ is deterministic, at day 1 it matches a fixed set $M$ of edges connecting vertices in $L$, and it extends 
	this matching to the maximum one at day 2. Notice that each vertex $\ell_i \in L \setminus V(M)$ that survived until 
	day 2 contributes one edge $\ell_iu_i$ to the final matching. Let $|M| = t$, then 
	$A(I_C) = t + |C \setminus V(M)| = t + \sum_{v \in L \setminus V(M)} X_v(I_C)$, and hence
	\begin{equation}\label{eq:chiStarSn}
		\chi^*(\scal_n) = \chi_{A}(\scal_n) = \ebb_{\scal_n}[A] = \ebb_{\scal_n}\left[t + \sum_{v \in L \setminus V(M)} X_v\right] = 
		t + \frac{n-2t}{2} = \frac{n}{2}.
	\end{equation}
	
	\noindent
	Combining (\ref{eq:chiStarSn}) with (\ref{eq:optSn}) we derive 
	$$
		\phi^*(\scal_n) = \frac{2n}{3n - 1},
	$$
	which tends to $2/3$ as $n$ goes to infinity.
\end{proof}

\section{Discussion}
\label{sec:discussion}
In this paper we studied the maximum cardinality matching  problem in the stochastically arrival-departure model. 
We defined the price of stochasticity and we have proven an upper bound of $\frac{2}{3}$ even in arrival-departure models defined
over two timesteps.
Furthermore, we proved the existence of a deterministic optimal algorithm for the problem and we derived an 
optimal algorithm for the fundamental case where we have two timesteps. Our algorithm is probabilistic and it heavily 
relies on the FPRAS we derived for approximating the expected value of an optimal matching.

Our work leaves open several interesting questions and creates a plethora of other challenging and important questions.
The most obvious open question is to derive a polynomial-time optimal algorithm for more than two timesteps. Is our algorithm
indeed optimal for this case? We conjecture that this is the case. A different route would be to aim for non-optimal 
algorithms that achieve good approximation guarantees. We highlight that any algorithm that greedily matches edges
{\em always} creates a maximal matching and thus it is by default a 0.5 approximation of the optimum. A more technical 
question is whether the computation of the {\em exact} optimal value can be done in polynomial time or if it is $\sharp P$-complete.

In addition to the above mentioned questions, we can study other intriguing objectives for the problem. 
Recall, $\chi_A(\gcal) := \sum_{I \in \ical_\gcal} \pr(I) \cdot A(I)$ for some adaptive algorithm $A$, thus in a sense
the objective is to ``be good on the average''. A different objective would be 
$\rho_A(\gcal) := \sum_{I \in \ical_\gcal} \pr(I) \cdot \frac{A(I)}{\opt(I)}$ which would ``penalize'' the algorithm for 
missing edges that should have been matched.

\end{document}